\newcommand{\std}{\mbox{std}}
\newcommand{\wch}{\widecheck{h}}
\newcommand{\whf}{\widehat{f}}
\newcommand{\whg}{\widehat{g}}
\newcommand{\C}{\mathbb{C}}
\newcommand{\A}{{\cal A}}
\newcommand{\be}{\beta}
\newcommand{\wtb}{\widetilde{\beta}}
\newcommand{\ote}{\! \otimes \!}
\newcommand{\ot}{\! \odot \!}
\newcommand{\Ot}{\bigodot}
\newcommand{\e}{\{e\}}
\newcommand{\E}{{\cal E}}
\newcommand{\B}{{\cal B}}
\newtheorem{Def}{\underline{\underline{Definition}}}
\newtheorem{Lemma}{{Lemma}}
\newtheorem{Theorem}{\underline{Theorem}}
\newtheorem{Prop}{{Proposition}}
\newtheorem{Cor}{Corollary}
\begin{document}
\bibliographystyle{IEEEtran} 

\title{On Holant Theorem and Its Proof}  
\author{Ali Al-Bashabsheh, Yongyi Mao and Abbas Yongacoglu 
}

\date{}
\maketitle

\begin{abstract}

Holographic algorithms are a recent breakthrough in computer science and has found applications in information theory.  
This paper provides a proof to the central component of holographic algorithms, namely, the Holant theorem. Compared with previous works, the 
proof appears simpler and more direct. Along the proof, we also develop a mathematical tool, which we call c-tensor. We expect the notion of c-tensor may be applicable over a wide range of analysis.

\end{abstract} 


\section{Introduction}

Holographic algorithms \cite{Valiant2004:Holographic, Cai2007:Art, Cai2007:Symmetric, Cai2008:Bases-Collapse}, pioneered by L. Valiant,  are a recent breakthrough in the theory of computation. Specifically, in his landmark paper \cite{Valiant2004:Holographic}, Valiant presented polynomial-time
 algorithms -- holographic algorithms --- for a number of problem families which were not known to be in P previously. Holographic algorithms have been recently introduced to the information theory community and applied to solving the capacity of certain constrained coding problems \cite{Schwartz:2008}.

Briefly, holographic algorithms set out to compute the sum of a multi-variate function over its configuration space where the function considered involves a large number of variables and factors as the product of local functions. Problems of such nature arise frequently in information theory, for example in computing constrained-coding capacity and in decoding various error correction codes. It is well-known that in general, obtaining exact solutions for such problems is computationally intractable. However, in the methodology of holographic algorithms, when it is possible to apply certain transformation of the problems, polynomial-time solvers can be constructed. Such transformations, which Valiant refers to as holographic reductions, form the basis of holographic algorithms.

The central component of holographic reductions is the {\em Holant Theorem}, which was introduced and proved in \cite{Valiant2004:Holographic}. The proof however appears rather encrypted to many audience,  and subsequently inspired an alternative proof given by Cai and  Choudhary in \cite{Cai2007:Tensor}. As the proof of \cite{Cai2007:Tensor} uses a sophisticated machinery of tensors that is not familiar to broad audience, here we present a more direct proof. At least to us, our proof seems simpler, accessible for general audience and contains more insights. Additionally, the Holant Theorem in this paper is proved in its most general form, namely, the variables may take values from arbitrary alphabet. This contrasts the previous works where only binary alphabets are considered.

It  is remarkable that our proof relies on the notion of ``c-tensor'', a term which we coin in this paper. In a sense equivalent to the standard tensor product, c-tensor differs from the standard tensor product in that it is a commutative operation. Using c-tensor, our proof of Holant Theorem appears more transparent. Although perhaps under the guise of mathematical literature, the notion of c-tensor synthesized in this paper appears to be a useful tool. We expect that c-tensor may find other applications in context beyond Holant Theorem.

This paper is organized as follows. Section \ref{sec:pre} gives mathematical preliminaries, where the main focus is to develop the mathematical tool of c-tensor. Section \ref{sec:holant} states and proves Holant theorem. Section \ref{conclusion} provides a brief conclusion.


\section{Preliminaries}

\label{sec:pre}

\subsection{Assignments}
For an arbitrary finite set%
\footnote{All sets in this work are non-empty unless otherwise specified.}
$E$ and any finite alphabet ${\cal A}$, we refer to any function mapping $E$ into ${\cal A}$ as an
$\cal A$-\emph{assignment} on $E$, and collectively denote the set of all such assignments by ${\cal A}^{E}$.
If $x$ is an $\A$-assignment on $E$, we often write  $x_E$ in place of $x$ to make explicit the domain of function $x$. 
For any subset $U \subseteq E$, the \emph{restriction}
of an ${\cal A}$-assignment $x_{E}$ to $U$ will be denoted by $x_{E:U}$. That is, $x_{E:U}$ is an ${\cal A}$-assignment on $U$ such
that for every $e\in U$, $x_{E:U}(e)=x_E(e)$.
%
%
%
If $E$ is clear from the context, we may write $x_{U}$ in place of $x_{E:U}$ for simplicity. In particular, such practice will be
more common when $U$ is a singleton $\{e\}$ for some $e\in E$. In this case, we always write $x_{\{e\}}$ rather than $x_{E:\{e\}}$.

With a slight abuse of notation, an assignment, say $x_E$, is also treated as a {\em set}, namely, the set $\{(e, x_E(e))|e\in E\}$
or the graph of function $x_E$.
Under such interpretation, restriction $x_{E:U}$ is also understood as set $\{(e,x_{E}(e)) | e \in U\}$. In addition, for any two disjoint finite sets
$E_1$ and $E_2$ and any two ${\cal A}$-assignments $x_{E_1}$ and $x_{E_2}$, the union $x_{E_1}\cup x_{E_2}$ is well defined and can be interpreted back as an ${\cal A}$-assignment on $E_1\cup E_2$ defined by
\[
(x_{E_1}\cup x_{E_2})(e) = \left\{ \begin{array}{cc}
x_{E_{1}}(e), & \mbox{if } e \in E_{1} \\
x_{E_{2}}(e), & \mbox{if }  e \in E_{2}
\end{array} \right.
\]
Conversely, every ${\cal A}$-assignment $x_{E_1 \cup E_2}$ on $E_1\cup E_2$ can be understood as the union $x_{E_1}\cup x_{E_2}$ of two ${\cal A}$-assignments $x_{E_1}$ and $x_{E_2}$, which are the restrictions of $x_{E_1\cup E_2}$ to $E_1$ and $E_2$, respectively.
Furthermore, it is easy to verify that the decomposition of any $x_{E_1\cup E_2}$ in terms of the union of two such restrictions is unique for any fixed choice of $E_1$ and $E_2$. This establishes a one-to-one correspondence between the set ${\cal A}^{E_1\cup E_2}$
of all ${\cal A}$-assignments on $E_1\cup E_2$ and the cartesian product ${\cal A}^{E_1}\times {\cal A}^{E_2}$. Extending this argument by induction (on the number of disjoint sets), we obtain the following lemma. 

\begin{Lemma}
Let $E$ be a finite set and $\{ E_{1}, \ldots, E_{p}  \}$  be an arbitrary partition of $E$. Then for any finite alphabet ${\cal A}$ and
any ${\cal A}$-assignment $x_E$ on $E$, there exists a unique sequence of restrictions $x_{E:E_1}, x_{E:E_2}, \ldots, x_{E:E_p}$
of $x_E$ to $E_1, E_2, \ldots E_p$ respectively
such that
$
x_E=\bigcup_{i=1}^{p}x_{E:E_i}.
$
Conversely, for any given $\cal A$-assignments $x_{E_1}, x_{E_2}, \ldots x_{E_p}$, the union $\bigcup_{i=1}^{p}x_{E_i}$ is an
$\cal A$-assignment on $E$. That is, there is a one-to-one correspondence between ${\cal A}^E$ and the $p$-fold cartesian
product ${\cal A}^{E_1}\times {\cal A}^{E_2}\times \ldots \times{\cal A}^{E_p}$.
\label{Lemma:0}
\end{Lemma}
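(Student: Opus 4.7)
The plan is to proceed by induction on $p$, the number of blocks in the partition, using the $p=2$ case that the preceding paragraph has essentially already established as a base. For $p=1$ the statement is vacuous; for $p=2$, the text has just argued that $x_{E_1\cup E_2}\mapsto(x_{E:E_1},x_{E:E_2})$ is a bijection between $\mathcal{A}^{E_1\cup E_2}$ and $\mathcal{A}^{E_1}\times\mathcal{A}^{E_2}$, with the union operation serving as its inverse. So the real content is the induction step.

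For the induction, I would assume the claim for every partition into $p-1$ blocks and consider a partition $\{E_1,\ldots,E_p\}$ of $E$. Let $E':=E_1\cup\cdots\cup E_{p-1}$. Since $\{E',E_p\}$ is a $2$-block partition of $E$, the base case gives a unique decomposition $x_E=x_{E:E'}\cup x_{E:E_p}$. Now $\{E_1,\ldots,E_{p-1}\}$ is a partition of $E'$, so the induction hypothesis applied to the assignment $x_{E:E'}$ yields a unique decomposition $x_{E:E'}=\bigcup_{i=1}^{p-1}(x_{E:E'})_{E':E_i}$. To glue these together I need the elementary ``restriction of a restriction'' identity $(x_{E:E'})_{E':E_i}=x_{E:E_i}$ for each $i\le p-1$, which is immediate from the definition of restriction as a set of pairs. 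Combining the two decompositions produces $x_E=\bigcup_{i=1}^{p}x_{E:E_i}$, and existence is done.

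For uniqueness, suppose $x_E=\bigcup_{i=1}^{p}y_{E_i}$ for some $\mathcal{A}$-assignments $y_{E_i}$ on $E_i$. Setting $y_{E'}:=\bigcup_{i=1}^{p-1}y_{E_i}$ (well defined on $E'$ by the converse part of the $p=2$ case, or the converse part of the induction hypothesis), I get $x_E=y_{E'}\cup y_{E_p}$, so base-case uniqueness forces $y_{E'}=x_{E:E'}$ and $y_{E_p}=x_{E:E_p}$. The induction hypothesis applied to $y_{E'}=x_{E:E'}$ then forces $y_{E_i}=x_{E:E_i}$ for $i\le p-1$ as well. The converse direction (any choice of $x_{E_i}\in\mathcal{A}^{E_i}$ unions to a well defined $\mathcal{A}$-assignment on $E$) is a pointwise check: each $e\in E$ lies in exactly one $E_i$, so the value is unambiguous.

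The only subtle point, and the main thing to get right, is bookkeeping of the nested restrictions $(x_{E:E'})_{E':E_i}$ versus $x_{E:E_i}$; once those are identified, the induction essentially writes itself, and the one-to-one correspondence ${\cal A}^{E}\leftrightarrow{\cal A}^{E_1}\times\cdots\times{\cal A}^{E_p}$ is immediate from the existence and uniqueness of the decomposition.
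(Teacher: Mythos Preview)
Your proposal is correct and matches the paper's own approach: the paper does not write out a formal proof but simply remarks that the lemma follows by extending the $p=2$ argument by induction on the number of disjoint sets, which is precisely the scheme you carry out. Your handling of the nested-restriction identity and the uniqueness step is more explicit than anything the paper provides, but the underlying strategy is identical.
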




%

\subsection{Space of functions}

%
%

In this paper, we will often work  with vector spaces in the form of ${\C^{S}}$, where $\C^{S}$ is the set of all functions from a finite set $S$ to the field of complex numbers $\C$.
 The following lemma, straight-forward to prove, justifies that the set $\C^S$ is a vector space.
\begin{Lemma}
For any finite set $S$, let the set $\C^{S}$ of all functions  mapping $S$ into $\C$ be equipped with the following  two operations:
\begin{itemize}
\item Addition: $\forall f, f' \in \C^{S}$, $(f+f')(s): = f(s) + f'(s)$ for every $s\in S$.
\item Scalar multiplication: $\forall f\in \C^{S}$ and $\alpha\in \C$, $(\alpha f)(s):= \alpha(f(s))$ for every $s\in S$.
\end{itemize}
Then $\C^{S}$
 is a vector space isomorphic to
$\C^{|S|}$.
\label{Lemma:iso}
\end{Lemma}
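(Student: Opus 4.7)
The plan is to verify the two claims of the lemma separately: first that $\C^S$ with the stated operations satisfies the vector-space axioms, and then to exhibit an explicit linear bijection between $\C^S$ and $\C^{|S|}$.

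For the vector-space structure, I would simply check each axiom by pushing it through the pointwise definitions and invoking the corresponding property of $\C$. For instance, commutativity of addition follows because for every $s \in S$, $(f+f')(s) = f(s)+f'(s) = f'(s)+f(s) = (f'+f)(s)$, so $f+f' = f'+f$ as functions. The additive identity is the zero function $\mathbf{0}(s)=0$ for all $s$; the additive inverse of $f$ is the function $(-f)(s) := -f(s)$; associativity of addition, distributivity of scalars, associativity of scalar multiplication, and the fact that $1\cdot f = f$ all fall out in the same pointwise manner. Because each axiom reduces to an identity in $\C$ evaluated separately at every $s\in S$, no axiom requires anything beyond unpacking the definition.

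For the isomorphism claim, since $S$ is finite I would fix an enumeration $S = \{s_1, s_2, \ldots, s_{|S|}\}$ and define the map
\[
\Phi : \C^S \to \C^{|S|}, \qquad \Phi(f) = (f(s_1), f(s_2), \ldots, f(s_{|S|})).
\]
Linearity of $\Phi$ is immediate from the pointwise definitions of addition and scalar multiplication on $\C^S$. Injectivity follows because $\Phi(f) = \Phi(f')$ forces $f(s_i) = f'(s_i)$ for every $i$, hence $f = f'$ as functions on $S$. Surjectivity is equally direct: given any tuple $(\alpha_1, \ldots, \alpha_{|S|}) \in \C^{|S|}$, the function $f$ defined by $f(s_i) := \alpha_i$ lies in $\C^S$ and maps to that tuple under $\Phi$.

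There is no genuine obstacle here; the statement is a standard bookkeeping fact and the only thing to be careful about is making the enumeration of $S$ explicit so that the map $\Phi$ is well defined, and noting that a different choice of enumeration yields a different but equally valid isomorphism. The entire argument is routine once the pointwise operations are written out.
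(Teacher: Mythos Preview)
Your proposal is correct and matches the paper's approach: the paper does not give a formal proof but calls the lemma ``straight-forward to prove'' and then, in the discussion immediately following, constructs exactly the isomorphism you describe (via a bijection $r:S\to\{1,\ldots,|S|\}$ and the map $\sigma(f)=(f(r^{-1}(1)),\ldots,f(r^{-1}(|S|)))$, which it terms a \emph{natural} isomorphism). Your enumeration $s_1,\ldots,s_{|S|}$ plays the role of $r^{-1}$, and your remark about different enumerations yielding different isomorphisms is the same observation the paper makes.
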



Note that from this lemma it is immediate that $\C^{S} \cong \C^{S'}$ whenever $|S| = |S'| < \infty$. In the lemma, the fact that $\C^{S} \cong \C^{|S|}$ can be established by advising an explicit vector space isomorphism. 
Such isomorphism is clearly not unique. 
Let $r$ be an arbitrary set bijection from $S$ into the set $\{1, \ldots, |S| \}$. 
It is not hard to show that $\sigma: \C^{S} \rightarrow \C^{|S|}$ such that 
\[
\sigma(f) = ( f(r^{-1}(1)), \ f(r^{-1}(2)), \ \ldots, \ f(r^{-1}(|S|))   )   
\]
for all $f \in \C^{S}$ is a vector space isomorphism. We refer to such isomorphism as a \emph{natural} one.
At places where confusion is unlikely, we may write $f r^{-1} (\cdot)$ for the composition $f(r^{-1}(\cdot))$ to enhance readability.

Since $\C^{S}$ is a vector space over $\C$ of dimension $|S|$, one can find a set of functions which form a basis in $\C^{S}$. An obvious example of a basis is the ``standard" one, namely, the set
$\B_{\std} = \{\delta_{s} : s \in S\}$ where for any $s \in S$, $\delta_{s}: S \rightarrow \C$ is such that $\delta_{s}(x) = 1$ if $x = s$ and otherwise
$\delta_{s}(x) = 0$ for all $x \in S$.
Note that a natural vector space isomorphism mapping $\C^{S}$ to $\C^{|S|}$ simply accounts to taking coordinates with respect to $\B_{\std}$ under the ordering induced by the bijection $r$ associated with the natural isomorphism. More explicitly, let $\sigma: \C^{S} \rightarrow \C^{|S|}$ be a natural isomorphism and let $r:S \rightarrow \{1, \ldots, |S|\}$ be the bijection associated with $\sigma$. Then for any $f \in \C^{S}$, the $i$th component of $\sigma(f)$ is the coefficient of $\delta_{r^{-1}(i)}$ when $f$ is expressed as a linear combination of elements from the basis $\B_{\std}$.
Another example of basis can be constructed as follows. Let $S = \{s_{1}, \ldots, s_{|S|}\}$ and let
$S_{i} = \{s_{1}, \ldots, s_{i}\}$ for all $1 \leq i \leq |S|$. Further let $\B = \{\beta_{1}, \ldots, \beta_{|S|}\}$
where $\beta_{i}: S \rightarrow \C$ is such that $\beta_{i}(x) = 1$ if $x \in S_{i}$ and otherwise $\beta_{i}(x) = 0$ for all $x \in S$. Then, one can easily verify that the set $\B$ is a basis of $\C^{S}$.

Finally, for any vector space $\C^{S}$, as above,  we define the map $\langle \cdot, \cdot \rangle : \C^{S} \times \C^{S} \rightarrow \C$ by
\begin{eqnarray}
\langle f,f' \rangle: = \sum_{x \in S} f(x) f'(x)
\label{eq:inner}
\end{eqnarray}
for all $f, f' \in \C^{S}$.
It is clear that the map $\langle \cdot,\cdot \rangle$ is bilinear.

As we will see momentarily, vector space $\C^{S}$ considered in this paper mostly takes $S$
as ${\cal A}^{E}$ for some choice of alphabet $\cal A$ and finite set $E$. In this case, we will write $\C_{\cal A}^{E}$ in place of $\C^{({\cal A}^{E}  )}$ to lighten the notations.

\subsection{c-Tensors}

Let $E_{1}$ and $E_{2}$ be two disjoint sets. For any $f_{1} \in \C_{\cal A}^{E_{1}}$ and $f_{2} \in \C_{\cal A}^{E_{2}}$,
we define the {\em c-tensor} $f_1\ot f_2$ of $f_1$ and $f_2$ as the element in $\C^{E_1\cup E_2}_{\cal A}$ (that is, the function mapping ${\cal A}^{E_1\cup E_2}$ into $\C$) such that for every $x_{E_1\cup E_2}\in {\cal A}^{E_1\cup E_2}$,
\[
(f_1\ot f_2)(x_{E_1\cup E_2}):= f_1(x_{(E_1\cup E_2):E_1})f_2(x_{(E_1\cup E_2):E_2})
\]

Inductively, we can extend the 
 notion of c-tensor to arbitrary number of functions $f_1 \in {\C}_{\cal A}^{E_1}, f_2 \in {\C}_{\cal A}^{E_2}, \ldots, f_p \in {\C}_{\cal A}^{E_p}$, where $\{E_1, E_2, \ldots, E_p\}$ is an arbitrary collection of disjoint finite sets.
The following lemma is directly provable from the definition of c-tensor.

\begin{Lemma}
c-tensor is commutative and associative.
\end{Lemma}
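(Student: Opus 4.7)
The plan is to verify both properties directly from the definition of c-tensor by unpacking one layer at a time and invoking two elementary facts: multiplication in $\C$ is commutative and associative, and the decomposition guaranteed by Lemma \ref{Lemma:0} implies that iterated restrictions of an assignment behave consistently (restricting first to $E_1 \cup E_2$ and then to $E_1$ yields the same map as restricting directly to $E_1$).

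For commutativity, I would fix disjoint finite sets $E_1, E_2$, functions $f_1 \in \C_\A^{E_1}$ and $f_2 \in \C_\A^{E_2}$, and observe that because $E_1 \cup E_2 = E_2 \cup E_1$, both $f_1 \ot f_2$ and $f_2 \ot f_1$ lie in the same space $\C_\A^{E_1 \cup E_2}$. Applied to an arbitrary assignment $x$ on $E_1 \cup E_2$, the definition gives $(f_1 \ot f_2)(x) = f_1(x_{E_1}) f_2(x_{E_2})$, and since these two values are complex numbers, they may be multiplied in either order, yielding $(f_2 \ot f_1)(x)$.

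For associativity, take pairwise disjoint finite sets $E_1, E_2, E_3$ and functions $f_i \in \C_\A^{E_i}$. Both $(f_1 \ot f_2) \ot f_3$ and $f_1 \ot (f_2 \ot f_3)$ are elements of $\C_\A^{E_1 \cup E_2 \cup E_3}$. For any $x \in \A^{E_1 \cup E_2 \cup E_3}$, I would unpack the left-hand side one step to $(f_1 \ot f_2)(x_{E_1 \cup E_2}) \cdot f_3(x_{E_3})$, then unpack $f_1 \ot f_2$ and use the observation above to rewrite the restrictions of $x_{E_1 \cup E_2}$ as $x_{E_1}$ and $x_{E_2}$. This yields $f_1(x_{E_1}) f_2(x_{E_2}) f_3(x_{E_3})$. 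Applying the analogous unpacking to the right-hand side produces the same triple product, and associativity of multiplication in $\C$ finishes the argument.

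The only nontrivial point is bookkeeping: one must be explicit that the restriction of a restriction is again a restriction to the inner set, so that inner occurrences of the form $x_{(E_1 \cup E_2):E_i}$ can be identified with $x_{E_i}$ for $i \in \{1,2\}$. This is immediate from the uniqueness clause of Lemma \ref{Lemma:0}. Once that identification is made, both claims collapse to the corresponding properties of ordinary multiplication of complex numbers, so no genuine obstacle remains.
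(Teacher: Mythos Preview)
Your proposal is correct and follows essentially the same approach as the paper: both arguments reduce commutativity and associativity of the c-tensor to the corresponding properties of multiplication in $\C$ by unpacking the definition and using the consistency of iterated restrictions. The only difference is that you make the restriction-of-a-restriction bookkeeping explicit, whereas the paper leaves it implicit in the notation $x_{E_1 \cup E_2 \cup E_3 : E_i}$.
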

\begin{proof}
Let $f_{1} \in \C_{\A}^{E_{1}}$, $f_{2} \in \C_{\A}^{E_{2}}$ and $f_{3} \in \C_{\A}^{E_{3}}$ where $E_{1}, E_{2}$ and $E_{3}$ are finite disjoint sets.
Then,
\begin{eqnarray*}
(f_{1} \ot f_{2}) \ot f_{3} (x_{E_{1} \cup E_{2} \cup E_{3} } ) \hspace{-.5cm} &&  = f_{1} \ot f_{2} (x_{E_{1} \cup E_{2} \cup E_{3} : E_{1} \cup E_{2} }) f_{3} (x_{E_{1} \cup E_{2} \cup E_{3} : E_{3} }) \\
&& = f_{1} (x_{E_{1} \cup E_{2} \cup E_{3} : E_{1} }) f_{2} (x_{E_{1} \cup E_{2} \cup E_{3} : E_{2} }) f_{3} (x_{E_{1} \cup E_{2} \cup E_{3} : E_{3} })\\
&& = f_{1} (x_{E_{1} \cup E_{2} \cup E_{3} : E_{1} }) f_{2} \ot f_{3} (x_{E_{1} \cup E_{2} \cup E_{3} : E_{2} \cup E_{3} })  \\
&& = f_{1} \ot ( f_{2} \ot f_{3} ) (x_{E_{1} \cup E_{2} \cup E_{3} } )
\end{eqnarray*}
Further,
\begin{eqnarray*}
f_{1} \ot f_{2} (x_{E_{1} \cup E_{2}}) \hspace{-.5cm}&& = f_{1}(x_{E_{1} \cup E_{2} : E_{1}}) f_{2}(x_{E_{1} \cup E_{2} : E_{2}}) \\
&&= f_{2}(x_{E_{1} \cup E_{2} : E_{2}}) f_{1}(x_{E_{1} \cup E_{2} : E_{1}}) \\
&&= f_{2} \ot f_{1} (x_{E_{1} \cup E_{2}})
\end{eqnarray*}
As claimed.
\end{proof}

As will be shown momentarily, c-tensor of two functions is in a sense equivalent to the standard notion of tensor product of two vectors. 
The reason we refer to this operation ``c-tensor'' is to emphasize its commutative nature, which in general does not hold for the standard tensor product. 
 Since the c-tensor is independent of bracketing and ordering (due to the previous lemma), the  notation $\Ot_{i = 1}^{p} f_{i}$, denoting the $p-$fold c-tensor of functions
 $f_1, f_2, \ldots, f_p$, is well defined.

We now show the equivalence between c-tensor and tensor product, where standard tensor product of two vectors $u$ and $v$ is denoted by  $u\otimes v$.

\begin{Lemma}
Let $E_1$ and $E_2$ be disjoint and $\sigma_1:{\C}_{\cal A}^{E_1}\rightarrow {\C}^{|{\cal A}^{E_1}|}$ and $\sigma_2:{\C}_{\cal A}^{E_2}\rightarrow {\C}^{|{\cal A}^{E_2}|}$ be two natural vector-space isomorphisms. Then there exists a natural vector-space isomorphism
$\sigma:{\C}_{\cal A}^{E_1\cup E_2}\rightarrow {\C}^{|{\cal A}^{E_1\cup E_2}|}$ such that
$\sigma(f_1\ot f_2)=\sigma_1(f_1)\otimes \sigma_2(f_2)$
for any $f_1\in \C_{\cal A}^{E_1}$ and
$f_2\in \C_{\cal A}^{E_2}$.
\end{Lemma}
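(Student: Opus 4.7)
The plan is to construct $\sigma$ explicitly by choosing a bijection $r: {\cal A}^{E_1 \cup E_2} \to \{1, \ldots, |{\cal A}^{E_1 \cup E_2}|\}$ that is compatible, through Lemma \ref{Lemma:0}, with the bijections $r_1$ and $r_2$ associated with $\sigma_1$ and $\sigma_2$, and with the standard indexing convention for the tensor product $\C^{m}\otimes \C^{n}\cong \C^{mn}$.

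First I would unpack the hypotheses. Let $m=|{\cal A}^{E_1}|$ and $n=|{\cal A}^{E_2}|$, and let $r_i:{\cal A}^{E_i}\to\{1,\ldots,|{\cal A}^{E_i}|\}$ be the bijection associated with the natural isomorphism $\sigma_i$, so that the $k$th coordinate of $\sigma_i(f_i)$ is $f_i r_i^{-1}(k)$. Under the usual convention, for $u\in\C^{m}$ and $v\in\C^{n}$ the tensor $u\otimes v\in\C^{mn}$ is the vector whose entry at position $(i-1)n+j$ is $u_i v_j$ (any other fixed convention works equally well after adjusting the definition of $r$ below).

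Next, by Lemma \ref{Lemma:0}, the map $x_{E_1\cup E_2}\mapsto (x_{(E_1\cup E_2):E_1},\, x_{(E_1\cup E_2):E_2})$ is a bijection from ${\cal A}^{E_1\cup E_2}$ to ${\cal A}^{E_1}\times {\cal A}^{E_2}$. Composing this with $r_1\times r_2$ and then the lexicographic encoding of pairs, I would define
\[
r(x_{E_1\cup E_2}) \;:=\; \bigl(r_1(x_{(E_1\cup E_2):E_1}) - 1\bigr)\, n \;+\; r_2(x_{(E_1\cup E_2):E_2}).
\]
Each stage is a bijection, so $r$ is a bijection from ${\cal A}^{E_1\cup E_2}$ onto $\{1,\ldots,mn\}$. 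Let $\sigma$ be the natural isomorphism associated with $r$ in the sense already defined in the paper.

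Finally I would verify the claimed identity by a direct computation at a generic coordinate. Fix any $k\in\{1,\ldots,mn\}$ and write $k=(i-1)n+j$ with $1\le i\le m$ and $1\le j\le n$; by construction $r^{-1}(k)$ is the unique assignment on $E_1\cup E_2$ whose restrictions to $E_1$ and $E_2$ are $r_1^{-1}(i)$ and $r_2^{-1}(j)$ respectively. The definition of c-tensor then gives that the $k$th coordinate of $\sigma(f_1\ot f_2)$ equals $f_1 r_1^{-1}(i)\cdot f_2 r_2^{-1}(j)$, which is exactly the $k$th coordinate of $\sigma_1(f_1)\otimes \sigma_2(f_2)$.

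The only real care needed is in lining up the indexing convention of the tensor product with the bijection $r$; once the compatible $r$ is chosen via Lemma \ref{Lemma:0} and the unique decomposition of assignments, the verification collapses to unwinding the three definitions (of the natural isomorphism, of c-tensor, and of the tensor product coordinates).
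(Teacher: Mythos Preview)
Your proposal is correct and follows essentially the same approach as the paper: both construct $r$ from $r_1$ and $r_2$ via the restriction bijection of Lemma~\ref{Lemma:0} together with lexicographic ordering of pairs, and then verify the identity coordinatewise by unfolding the definitions of c-tensor, natural isomorphism, and tensor product. The only cosmetic difference is that the paper lets $r$ take values in $\{1,\ldots,|{\cal A}^{E_1}|\}\times\{1,\ldots,|{\cal A}^{E_2}|\}$ and lists the entries of $\sigma(f)$ in lexicographic order directly, whereas you compose with the explicit encoding $(i,j)\mapsto (i-1)n+j$; the verifications are identical.
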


\begin{proof}
Let $r_{1}: \A^{E_{1}} \rightarrow \{1, \ldots, |\A^{E_{1}}|\}$ and 
$r_{2}: \A^{E_{2}} \rightarrow \{1, \ldots, |\A^{E_{2}}|\}$ be the two bijections associated with 
$\sigma_{1}$ and $\sigma_{2}$, respectively. Define the set map 
$r: \A^{E_{1} \cup E_{2}} \rightarrow \{1, \ldots, |\A^{E_{1}}|\} \times \{1, \ldots, |\A^{E_{2}}|\}$ given by
$r(x_{E_{1} \cup E_{2}}) = (r_{1}( x_{(E_{1} \cup E_{2}):E_{1}} ), r_{2}(x_{(E_{1} \cup E_{2}):E_{2}})  )$. Then $r$ is a bijection
since $r_{1}$ and $r_{2}$ are bijections.
Now define $\sigma: \C_{\A}^{E_{1} \cup E_{2}} \rightarrow \C^{|\A^{E_{1}\cup E_{2}}|}$ as
\[
\sigma(f) = (fr^{-1}(1,1), \ fr^{-1}(1,2), \ \ldots, \ 
fr^{-1}(|\A^{E_{1}}|,|\A^{E_{2}}|)
)  
\]
for all $f \in \C_{\A}^{E_{1} \cup E_{2}}$. Then $\sigma$ is a natural isomorphism and
\begin{eqnarray*}
\sigma(f_{1} \ot f_{2})
&& \hspace{-.6cm}= \left( (f_{1} \ot f_{2}) (r^{-1}(1,1)), \ (f_{1} \ot f_{2})(r^{-1}(1,2)), \ldots, 
 (f_{1} \ot f_{2})(r^{-1}(|\A^{E_{1}}|,|\A^{E_{2}}|)) \right) \\
&&\hspace{-.6cm}= \left( f_{1}r^{-1}_{1}(1)  f_{2}r^{-1}_{2}(1), \ f_{1}r^{-1}_{1}(1)  f_{2}r^{-1}_{2}(2), 
\ldots,  
 f_{1}r^{-1}_{1}(1) f_{2}r^{-1}_{2}(|\A^{E_{2}}|), \right. \\
&&\hspace{.5cm} \hspace{-.6cm}  f_{1}r^{-1}_{1}(2)  f_{2}r^{-1}_{2}(1), \ f_{1}r^{-1}_{1}(2)  f_{2}r^{-1}_{2}(2), 
\ldots,  
  f_{1}r^{-1}_{1}(2) f_{2}r^{-1}_{2}(|\A^{E_{2}}|),  \\
&&\hspace{5cm} \ldots \\
&&\hspace{.5cm} \hspace{-.6cm}  f_{1}r^{-1}_{1}(|\A^{E_{1}}|)  f_{2}r^{-1}_{2}(1), \ f_{1}r^{-1}_{1}(|\A^{E_{1}}|)  f_{2}r^{-1}_{2}(2), 
\ldots,  
 \left. f_{1}r^{-1}_{1}(|\A^{E_{1}}|) f_{2}r^{-1}_{2}(|\A^{E_{2}}|) \right) \\
&&\hspace{-.6cm}= ( f_{1}r^{-1}_{1}(1) \sigma_{2}(f_{2}), \ldots, f_{1}r^{-1}_{1}(|\A^{E_{1}}|) \sigma_{2}(f_{2}) ) \\ 
&&\hspace{-.6cm}= \sigma_{1}(f_{1}) \otimes \sigma_{2}(f_{2})
\end{eqnarray*}
as desired.
\end{proof}

Extending this lemma by induction to multi-fold c-tensor gives the following corollary.

\begin{Cor}
Let $E_1, E_2, \ldots, E_p$ be pairwise disjoint and $\sigma_1:{\C}_{\cal A}^{E_1}\rightarrow {\C}^{|{\cal A}^{E_1}|},  \sigma_2:{\C}_{\cal A}^{E_2}\rightarrow {\C}^{|{\cal A}^{E_2}|}, \ldots, \sigma_p:{\C}_{\cal A}^{E_p}\rightarrow {\C}^{|{\cal A}^{E_p}|}$ be natural vector-space isomorphisms. Then there exists a natural vector-space isomorphism
$\sigma:{\C}_{\cal A}^{\bigcup_i^p E_i}\rightarrow {\C}^{|{\cal A}^{\bigcup_i^p E_i}|}$ such that
\[\sigma(\Ot_i^p f_i)=\sigma_1(f_1)\otimes \sigma_2(f_2) \otimes \ldots \otimes \sigma_p(f_p)\]
 for any $f_1\in \C_{\cal A}^{E_1}, f_2\in \C_{\cal A}^{E_2},\ldots, f_p\in \C_{\cal A}^{E_p}$.
\label{cor:iso}
\end{Cor}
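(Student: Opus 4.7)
The plan is to prove this corollary by induction on $p$, with the $p = 2$ case being exactly the preceding lemma.

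For the inductive step, assume the result holds for $p-1$. Set $E' := \bigcup_{i=1}^{p-1} E_i$, which is disjoint from $E_p$ since the $E_i$'s are pairwise disjoint. By associativity of the c-tensor (established earlier), $f' := \Ot_{i=1}^{p-1} f_i$ is a well-defined element of $\C_\A^{E'}$ and moreover $\Ot_{i=1}^{p} f_i = f' \ot f_p$. The induction hypothesis produces a natural isomorphism $\sigma': \C_\A^{E'} \rightarrow \C^{|\A^{E'}|}$ satisfying $\sigma'(f') = \sigma_1(f_1) \otimes \cdots \otimes \sigma_{p-1}(f_{p-1})$. Applying the preceding lemma to $E'$ and $E_p$ with the isomorphisms $\sigma'$ and $\sigma_p$ yields a natural isomorphism $\sigma: \C_\A^{E' \cup E_p} \rightarrow \C^{|\A^{E' \cup E_p}|}$ with $\sigma(f' \ot f_p) = \sigma'(f') \otimes \sigma_p(f_p)$. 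Chaining the two equalities then gives $\sigma(\Ot_{i=1}^p f_i) = \sigma_1(f_1) \otimes \cdots \otimes \sigma_p(f_p)$, completing the induction.

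The only bit that requires some care is the interpretation of the $p$-fold standard tensor product on the right-hand side. What the induction actually produces is the bracketed expression $(\sigma_1(f_1) \otimes \cdots \otimes \sigma_{p-1}(f_{p-1})) \otimes \sigma_p(f_p)$, and one must identify this with the unbracketed $p$-fold tensor. This hinges on the standard associativity of the tensor product of vectors realized through the lexicographic flattening $\C^{m_1} \otimes \C^{m_2} \otimes \cdots \otimes \C^{m_p} \cong \C^{\prod_i m_i}$, which is precisely the flattening implicit in the bijection $r$ constructed in the previous lemma's proof. I expect this bookkeeping — rather than any conceptual obstacle — to be the main thing to watch; it can be dispatched by fixing a flattening convention once at the outset and observing that the induction respects it, so that the isomorphism $\sigma$ delivered at each step remains natural in the sense defined in the paper.
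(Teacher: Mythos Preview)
Your proposal is correct and follows exactly the approach the paper indicates: the paper simply states that the corollary follows by extending the preceding lemma ``by induction to multi-fold c-tensor,'' and your write-up carries out precisely that induction, including the appropriate appeal to associativity of both the c-tensor and the standard tensor product.
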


This establishes a sense of equivalence between c-tensor and the standard tensor product.

\subsection{Basis}

%

Let $\A$ be a finite alphabet and $E$ be a finite set. For each $e \in E$,  define map $\tau_{e}:\C^{\A} \rightarrow \C_{\A}^{\e}$ such that
for all $f \in \C^{\A}$
\[
\tau_{e}(f)(x_{\e}) = f(x_{\e}(e))
\]
for all $x_{\e} \in \A^{\e}$.

The following lemma shows that $\tau_{e}$ is an isomorphism.
\begin{Lemma}
Let $E$ be a finite set and for each $e \in E$, define $\tau_{e}:\C^{\A} \rightarrow \C_{\A}^{\e}$ as above.
Then each $\tau_{e}$ is a vector space isomorphism. 
\label{lemma:sigma_e}
\end{Lemma}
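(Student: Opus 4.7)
The plan is to verify the two defining properties of a vector-space isomorphism separately: linearity and bijectivity. For linearity, I would simply unpack the definition of $\tau_e$ and exploit the pointwise nature of the operations on $\C^{\A}$ given in Lemma~\ref{Lemma:iso}. Explicitly, for $f,g\in\C^{\A}$, $\alpha\in\C$, and an arbitrary $x_{\{e\}}\in\A^{\{e\}}$, the computation
\[
\tau_{e}(\alpha f+g)(x_{\{e\}}) = (\alpha f+g)(x_{\{e\}}(e)) = \alpha f(x_{\{e\}}(e))+g(x_{\{e\}}(e)) = \bigl(\alpha\tau_{e}(f)+\tau_{e}(g)\bigr)(x_{\{e\}})
\]
establishes both additivity and homogeneity in one stroke.

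For bijectivity I would exploit the obvious one-to-one correspondence between $\A$ and $\A^{\{e\}}$: every $x_{\{e\}}\in\A^{\{e\}}$ is determined by its single value $x_{\{e\}}(e)\in\A$, and conversely every $a\in\A$ specifies a unique assignment $x^{a}_{\{e\}}\in\A^{\{e\}}$ via $x^{a}_{\{e\}}(e)=a$. Injectivity of $\tau_{e}$ then follows because $\tau_{e}(f)=\tau_{e}(g)$ forces $f(a)=\tau_{e}(f)(x^{a}_{\{e\}})=\tau_{e}(g)(x^{a}_{\{e\}})=g(a)$ for every $a\in\A$, so $f=g$.

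At this point one may finish in either of two equivalent ways. The first is to note that by Lemma~\ref{Lemma:iso}, $\dim\C^{\A}=|\A|=|\A^{\{e\}}|=\dim\C^{\{e\}}_{\A}$, so an injective linear map between two spaces of the same finite dimension is automatically surjective. The second, more constructive, is to produce an explicit preimage: given any $h\in\C^{\{e\}}_{\A}$, define $f\in\C^{\A}$ by $f(a):=h(x^{a}_{\{e\}})$ for every $a\in\A$, and verify that $\tau_{e}(f)(x_{\{e\}})=f(x_{\{e\}}(e))=h(x^{x_{\{e\}}(e)}_{\{e\}})=h(x_{\{e\}})$, the last equality since $x^{x_{\{e\}}(e)}_{\{e\}}$ and $x_{\{e\}}$ agree on the only element of $\{e\}$.

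There is no real obstacle here; the argument is entirely mechanical once one observes that $\tau_{e}$ is nothing more than a relabelling of the domain along the canonical bijection $\A\leftrightarrow\A^{\{e\}}$. I would therefore keep the proof short, writing out the linearity calculation, observing the identification $\A\leftrightarrow\A^{\{e\}}$, and appealing to the dimension argument to close the bijectivity step without repeating the surjectivity computation.
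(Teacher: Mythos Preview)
Your proposal is correct and follows essentially the same approach as the paper: verify linearity by unwinding the pointwise definition, establish injectivity by pulling back through the bijection $\A\leftrightarrow\A^{\{e\}}$, and conclude surjectivity via the equality of (finite) dimensions. The only cosmetic difference is that the paper checks linearity on $\alpha f+\alpha' f'$ rather than $\alpha f+g$, and it omits your optional constructive inverse.
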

\begin{proof}
Let $f$ and $f'$ be arbitrary functions from $\C^{\A}$ and $\alpha$ and $\alpha'$ be arbitrary scalars from $\C$. Then
for all $x_{\e} \in \A^{\e}$ we have
\begin{eqnarray*}
\tau_{e} (\alpha f + \alpha' f') (x_{\e}) \hspace{-.6cm} &&  = (\alpha f + \alpha' f') (x_{\e}(e)) \\
&& = (\alpha f) (x_{\e}(e)) + (\alpha' f') (x_{\e}(e)) \\
&& = \alpha f(x_{\e}(e)) + \alpha' f'(x_{\e}(e)) \\
&& = \alpha \tau_{e}(f)(x_{\e}) + \alpha' \tau_{e}(f) (x_{\e}) 
\end{eqnarray*}
Hence, $\tau_{e} (\alpha f + \alpha' f') = \alpha \tau_{e}(f) + \alpha' \tau_{e}(f')$ and therefore $\tau_{e}$ preserves
addition and scalar multiplication.

Assume $f,f' \in \C^{\A}$ are such that $\tau_{e}(f) = \tau_{e}(f')$ then $\tau_{e}(f) (x_{\e}) = \tau_{e}(f')(x_{\e})$
for all $x_{\e} \in \A^{\e}$. Hence, $f(x_{\e}(e)) = f'(x_{\e}(e))$ for all $x_{\e}$. This is equivalent to
$f(a) = f'(a)$ for all $a \in \A$ and hence $f = f'$, which implies $\tau_{e}$ is injective. 
Surjectivity of $\tau_{e}$ is automatic since $\C^{\A}$ and $\C_{\A}^{\e}$ both have the same finite dimension.
\end{proof}

By lemma \ref{Lemma:iso} we know that $\C^{\A}$ is a vector space of dimension $|\A|$. Let $\B$ be a basis%
\footnote{Our definition for a basis coincide with the one from linear algebra, i.e, a basis is a linearly independent spanning set.
In contrast, the definition of a basis in Valiant's work was that of a spanning set. This work still holds if the definition
of a basis was chosen to confine with Valiant's definition.}
 of $\C^{\A}$ and for each $e \in E$ denote by $\tau_{e}(\B)$ the image of $\B$ under $\tau_{e}$,
that is, $\tau_{e}(\B) = \{ \tau_{e}(\beta) | \beta \in \B \}$.
Then $\tau_{e}(\B)$ forms a basis in $\C_{\A}^{\e}$ for all $e \in E$ because an isomorphism maps basis into basis.
Since each element in the basis ${\tau_{e}(\B)}$ is a map from $\A^{\e}$ into $\C$,
then for any $\B$-assignment on $E$, say $b_{E}$,
the c-tensor $\Ot_{e \in E} \tau_{e}(b_{E}(e))$ is a map from $\A^{E}$ into $\C$. 
This motivates the following definition. For any $E$, $\A$ and $\B$ as above, we define the map $\E: \A^{E} \times \B^{E} \rightarrow \C$ such that
\[
\E(a_{E}, b_{E}) = ( \Ot_{e \in E} \tau_{e}(b_{E}(e) ) ) (a_{E})
\]
for all $(a_{E}, b_{E}) \in \A^{E} \times \B^{E}$.
Note that the name of the map was chosen to emphasize that its arguments are assignments on $E$. Hence in subsequent discussions-
if we write for instance $\E_{i}$, then we mean the map from $\A^{E_{i}} \times {\B}^{E_{i}}$ to $\C$ defined as above.

Now if $b_{E}$ is a fixed $\B$-assignment on $E$, then map $\E$ induces, via $b_{E}$, a function $\E^{[b_{E}]}$ mapping $\A^{E}$ to $\C$:
Formally, for any $b_{E} \in \B^{E}$ we define $\E^{[b_{E}]}: \A^{E} \rightarrow \C$ by
$
          \E^{[b_{E}]}( a_{E} ) = \E(a_{E}, b_{E})
$
for all $a_{E} \in \A^{E}$.

Conversely, if $a_{E}$ is a fixed $\A$-assignment on $E$, then map $\E$ induces, via $a_{E}$, a function $\E_{[a_{E}]}$ mapping $\B^{E}$ to $\C$:
Formally, for any $a_{E} \in \A^{E}$ we define $\E_{[a_{E}]} : \B^{E} \rightarrow \C$ by
$
          \E_{[a_{E}]}( b_{E} ) = \E(a_{E}, b_{E})
$
for all $b_{E} \in \B^{E}$.

%

Before we proceed, we need the following result from algebra 
\begin{Theorem}
If $\B = \{\be_{1}, \ldots, \be_{k} \}$ is a basis of $\C^{k}$,
then $\{\be_{i_{1}} \ote \ldots \ote \be_{i_{m}} | \be_{i_{1}}, \ldots, \be_{i_{m}} \in \B, 1 \leq i_{j} \leq k \ \forall j \}$ 
is a basis of the vector space $\C^{k^{m}}$.
\label{Thm:algebra}
\end{Theorem}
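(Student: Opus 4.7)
The plan is to proceed by induction on $m$. The base case $m = 1$ is exactly the hypothesis that $\B$ is a basis of $\C^{k}$. For the inductive step, I would first note that the candidate family is indexed by tuples $(i_{1}, \ldots, i_{m}) \in \{1, \ldots, k\}^{m}$, of which there are $k^{m} = \dim \C^{k^{m}}$. Hence it suffices to show that these $k^{m}$ tensors are linearly independent as an indexed family; distinctness (so that the set in the statement really has $k^{m}$ elements) and spanning will then follow automatically from a dimension count.

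The key ingredient is the following linear-algebra fact: given a basis $\{v_{1}, \ldots, v_{k}\}$ of $\C^{k}$ and arbitrary vectors $u_{1}, \ldots, u_{k} \in \C^{n}$, the identity $\sum_{j=1}^{k} u_{j} \otimes v_{j} = 0$ forces $u_{j} = 0$ for every $j$. I would prove this by identifying $u \otimes v$ with the rank-one matrix $uv^{T} \in \C^{n \times k}$, so that the hypothesis becomes the matrix identity $UV^{T} = 0$ with $U = [u_{1} \mid \cdots \mid u_{k}] \in \C^{n \times k}$ and $V = [v_{1} \mid \cdots \mid v_{k}] \in \C^{k \times k}$. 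Because $\{v_{j}\}$ is a basis of $\C^{k}$ the matrix $V$ is invertible, whence $U = 0$ and each $u_{j}$ vanishes.

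To complete the induction, use associativity of $\otimes$ to rewrite each candidate element as $(\be_{i_{1}} \otimes \cdots \otimes \be_{i_{m-1}}) \otimes \be_{i_{m}}$. Suppose $\sum_{i_{1}, \ldots, i_{m}} c_{i_{1} \ldots i_{m}}\, \be_{i_{1}} \otimes \cdots \otimes \be_{i_{m}} = 0$ and regroup the summation by the last index, obtaining $\sum_{i_{m}} U_{i_{m}} \otimes \be_{i_{m}} = 0$, where $U_{i_{m}} := \sum_{i_{1}, \ldots, i_{m-1}} c_{i_{1} \ldots i_{m}}\, \be_{i_{1}} \otimes \cdots \otimes \be_{i_{m-1}} \in \C^{k^{m-1}}$. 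The Kronecker-product lemma above forces $U_{i_{m}} = 0$ for every $i_{m}$, and applying the induction hypothesis to each such $U_{i_{m}}$ in turn forces every coefficient $c_{i_{1} \ldots i_{m}}$ to vanish. The main obstacle is establishing the standalone Kronecker-product lemma cleanly; once that is in hand, the rest of the argument is routine bookkeeping organized around associativity and the induction.
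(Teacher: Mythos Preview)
Your argument is correct. The induction scheme is standard, and the Kronecker-product lemma is proved cleanly: the identification of $u \otimes v$ with $uv^{T}$ turns $\sum_{j} u_{j} \otimes v_{j} = 0$ into $UV^{T} = 0$, and invertibility of $V$ (equivalently of $V^{T}$) forces $U = 0$. The regrouping by the last index and the appeal to the induction hypothesis are routine, and the dimension count is the right way to conclude that the indexed family is both spanning and made up of distinct vectors.

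As for comparison with the paper: there is nothing to compare against. The paper does not prove this theorem at all; it is introduced with the phrase ``we need the following result from algebra'' and stated without proof, then immediately used as a black box in the proof of the subsequent theorem (the c-tensor version). Your write-up therefore supplies an argument the paper omits. If anything, you might streamline slightly by observing that your Kronecker-product lemma already is the inductive step in disguise (take $n = k^{m-1}$ and let the $u_{j}$ range over linear combinations of $(m-1)$-fold tensors), but the organization you chose is perfectly fine.
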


The following theorem is a recast of the previous one in the costumes of c-tensors and our frame of work.
\begin{Theorem}
If $\B = \{\be_{1}, \ldots, \be_{|\A|}\}$ is a basis of $\C^{\A}$, then $\{\E^{[b_{E}]} | b_{E} \in \B^{E}  \}$ is a basis of $\C_{\A}^{E}$.
\label{Thm:basis}
\end{Theorem}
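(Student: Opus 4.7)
The plan is to reduce the statement to Theorem \ref{Thm:algebra} by applying the isomorphism between c-tensors and standard tensor products supplied by Corollary \ref{cor:iso}. Concretely, $\E^{[b_E]} = \Ot_{e \in E} \tau_e(b_E(e))$ by definition, so it suffices to show that the family $\{\Ot_{e\in E}\tau_e(b_E(e)) : b_E\in\B^E\}$ spans $\C_{\A}^E$ and is linearly independent.

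First I would verify the dimension count: since $\{E_e\} = \{\{e\}\}_{e\in E}$ partitions $E$, Lemma \ref{Lemma:0} gives a bijection between $\A^E$ and $\prod_{e\in E}\A^{\{e\}}$, so $|\A^E| = |\A|^{|E|}$, matching both $|\B^E|$ and $\dim\C_{\A}^E$. Next, for each $e \in E$ I would fix a natural isomorphism $\sigma_e: \C_{\A}^{\{e\}} \rightarrow \C^{|\A|}$; by Lemma \ref{lemma:sigma_e}, $\tau_e$ is an isomorphism, and since isomorphisms carry bases to bases, $\sigma_e(\tau_e(\B))$ is a basis of $\C^{|\A|}$.

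Then I would invoke Corollary \ref{cor:iso} to obtain a natural isomorphism $\sigma: \C_{\A}^E \rightarrow \C^{|\A|^{|E|}}$ satisfying
\[
\sigma\bigl(\Ot_{e\in E}\tau_e(b_E(e))\bigr) = \bigotimes_{e\in E}\sigma_e(\tau_e(b_E(e)))
\]
for every $b_E \in \B^E$. As $b_E$ ranges over $\B^E$, the tuple $(\sigma_e(\tau_e(b_E(e))))_{e\in E}$ ranges over all $|\A|^{|E|}$ ordered selections of basis vectors from $\sigma_e(\tau_e(\B))$, so by Theorem \ref{Thm:algebra} the image $\{\sigma(\E^{[b_E]}) : b_E\in\B^E\}$ is a basis of $\C^{|\A|^{|E|}}$. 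Pulling back through the isomorphism $\sigma$ then shows $\{\E^{[b_E]}: b_E\in\B^E\}$ is a basis of $\C_{\A}^E$.

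The only delicate point I anticipate is making sure the parametrization $b_E \mapsto \bigotimes_{e\in E}\sigma_e(\tau_e(b_E(e)))$ is actually a bijection onto the tensor-basis of Theorem \ref{Thm:algebra}, so that distinct $\B$-assignments produce distinct basis vectors and all such tensor products are hit; this follows from the one-to-one correspondence between $\B^E$ and the index set $\B \times \cdots \times \B$ (a $|E|$-fold product), which is exactly what the Lemma \ref{Lemma:0} style decomposition of assignments guarantees. Beyond that bookkeeping, the proof is a direct transport along $\sigma$.
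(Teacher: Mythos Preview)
Your proposal is correct and follows essentially the same route as the paper: transport the family $\{\E^{[b_E]}\}$ through the isomorphism of Corollary~\ref{cor:iso} and then invoke Theorem~\ref{Thm:algebra}, with the cardinality bookkeeping handling the bijection onto the tensor basis. The paper makes one small refinement you should adopt: it fixes a single natural isomorphism $\sigma:\C^{\A}\to\C^{|\A|}$ and sets $\sigma_e := \sigma\circ\tau_e^{-1}$, so that $\sigma_e(\tau_e(\beta))=\sigma(\beta)$ and every tensor factor carries the \emph{same} basis $\sigma(\B)$, which is what Theorem~\ref{Thm:algebra} (stated for a single basis repeated in each slot) literally requires.
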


\begin{proof}
First note that $\{\{e_{1}\}, \ldots, \{e_{|E|}\} \}$ forms a partition of $E$ of size $|E|$.
For each $e \in E$, let $\tau_{e} : \C^{\A} \rightarrow \C_{\A}^{\{e\}}$ be an isomorphism as in lemma \ref{lemma:sigma_e}.
Further, let $\sigma : \C^{\A} \rightarrow \C^{|A|}$ be a natural vector space isomorphism and define $\sigma_{e}: \C_{\A}^{\{e\}} \rightarrow \C^{|\A|}$ as the composition $\sigma_{e} = \sigma \circ \tau_{e}^{-1}$ for all $e \in E$. Then, $\sigma_{e}$ is a natural vector space isomorphism for all $e \in E$. 
Hence, by corollary \ref{cor:iso}, 
there exists  a (natural) isomorphism
$\sigma_{E}: \C_{\A}^{E} \rightarrow \C^{|\A^{E}|}$ such that for all $f_{1} \in \C_{\A}^{\{e_{1}\}}, \ldots, f_{|E|} \in \C_{\A}^{\{e_{|E|}\}}$,
\[
\sigma_{E}(\Ot_{i=1}^{|E|}f_{i}) = \sigma_{e_1}(f_{1}) \otimes \ldots \otimes \sigma_{e_{|E|}}(f_{|E|})
\]
Let $\sigma(\B) = \{ \wtb_{1}, \ldots, \wtb_{|\A|} \}$ be the image of $\B$ under $\sigma$, then $\sigma(\B)$ is a basis of $\C^{|\A|}$. 
Let $X = \{\E^{[b_{E}]} | b_{E} \in \B^{E}  \}$ and
$Y = \{\wtb_{i_{1}} \ote \ldots \otimes \wtb_{i_{|E|}} | \wtb_{i_{1}}, \ldots, \wtb_{i_{|E|}} \in \sigma(\B), 1 \leq i_{j} \leq |\A| \ \forall j \}$.
Then theorem \ref{Thm:algebra} asserts that $Y$ is a basis of $\C^{|\A^{E}|}$. Hence we are done if we can show that
$\sigma_{E}(X) = Y$, since this will imply that $X$ is a basis (due to the fact that an isomorphism maps basis to basis). To this end, note that
\begin{eqnarray*}
x \in X \hspace{-.5cm} && \Rightarrow x = \Ot_{e \in E} \tau_{e}(b_{E}(e) ), \mbox{ some } b_{E} \in \B^{E} \\
        &&\hspace{-0cm} \Rightarrow x = \tau_{e_{1}}(\be_{i_{1}}) \ot \ldots \ot \tau_{e_{|E|}}(\be_{i_{|E|}}), \mbox{ some } \be_{i_{1}}, \ldots, \be_{i_{|E|}} \in \B \\
        &&\hspace{-0cm} \Rightarrow \sigma_{E}(x) = \sigma_{e_1}(\tau_{e_{1}}(\be_{i_{1}})) \otimes \ldots \otimes \sigma_{e_{|E|}}(\tau_{e_{|E|}}(\be_{i_{|E|}})), \mbox{ some } \be_{i_{1}}, \ldots, \be_{i_{|E|}} \in \B \\ 
        &&\hspace{-0cm} \Rightarrow \sigma_{E}(x) = \sigma(\be_{i_{1}}) \otimes \ldots \otimes \sigma(\be_{i_{|E|}}), \mbox{ some } \be_{i_{1}}, \ldots, \be_{i_{|E|}} \in \B \\
 				&&\hspace{-0cm} \Rightarrow \sigma_{E}(x) = \wtb_{j_{1}} \otimes \ldots \otimes \wtb_{j_{|E|}}, \mbox{ some } \wtb_{j_{1}}, \ldots, \wtb_{j_{|E|}} \in \sigma(\B) 				 \\
 				&&\hspace{-0cm} \Rightarrow \sigma_{E}(x) \in Y
\end{eqnarray*}
Thus, $\sigma_{E}(X) \subseteq Y$. Since both $\sigma_{E}$ and $\sigma$ are bijective, it follows that
$|\sigma_{E}(X)| = |X| = |\B|^{|E|} = |\sigma(\B)|^{|E|} = |Y|$. Therefore, $\sigma_{E}(X) = Y$.
\end{proof}

The following definition is fundamental
\begin{Def}
Let $\B$ be a basis for $\C^{\A}$ and for any $f \in \C_{\A}^{E}$ define 
\begin{itemize}
\item $\widehat{f}$ as the unique%
\footnote{If $\B$ was a spanning set, then $\whf$ is defined as \emph{any} function in $\C_{\A}^{E}$ satisfying 
$
f(a) = \langle \whf, \E_{[a]} \rangle
$ for all $a \in \A^{E}$.}
 map from $\B^{E}$ into $\C$ satisfying for all $a \in \A^{E}$
\[
f(a) = \langle \whf, \E_{[a]} \rangle
\]

\item $\widecheck{f}$ as the map from $\B^{E}$ into $\C$ such that for all $b \in \B^{E}$,
\[\widecheck{f}(b) = \langle f,\E^{[b]} \rangle \]
\end{itemize}
\label{def}
\end{Def}
The following proposition shows that the definition of $\whf$ makes sense. 
\begin{Prop}
For any $f \in \C_{\A}^{E}$ and any basis $\B$ of $\C^{\A}$, $\whf$ exists and is unique.
\end{Prop}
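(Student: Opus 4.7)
The plan is to reduce both existence and uniqueness to Theorem \ref{Thm:basis}, which asserts that $\{\E^{[b]} : b \in \B^E\}$ is a basis of $\C_\A^E$. Given any $f \in \C_\A^E$, I can therefore write $f$ uniquely as a linear combination
\[
f = \sum_{b \in \B^E} c_b \, \E^{[b]}
\]
for some scalars $c_b \in \C$. I would then define $\whf : \B^E \to \C$ by setting $\whf(b) := c_b$, which is manifestly a well-defined map from $\B^E$ into $\C$.

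The next step is to verify that this $\whf$ satisfies the defining identity. For any $a \in \A^E$, evaluating the expansion at $a$ gives
\[
f(a) = \sum_{b \in \B^E} c_b \, \E^{[b]}(a) = \sum_{b \in \B^E} \whf(b) \, \E(a,b) = \sum_{b \in \B^E} \whf(b) \, \E_{[a]}(b) = \langle \whf, \E_{[a]} \rangle,
\]
where the last equality uses the definition of the bilinear pairing in (\ref{eq:inner}). This establishes existence.

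For uniqueness, I would suppose that two maps $\whf_1, \whf_2 : \B^E \to \C$ both satisfy $f(a) = \langle \whf_i, \E_{[a]} \rangle$ for every $a \in \A^E$, and set $g := \whf_1 - \whf_2$. Bilinearity of $\langle \cdot, \cdot \rangle$ yields $\langle g, \E_{[a]} \rangle = 0$ for all $a \in \A^E$. I would then consider the element $h := \sum_{b \in \B^E} g(b)\, \E^{[b]} \in \C_\A^E$ and observe that $h(a) = \sum_b g(b) \E(a,b) = \langle g, \E_{[a]} \rangle = 0$ for every $a$, so $h$ is the zero function in $\C_\A^E$. Since $\{\E^{[b]} : b \in \B^E\}$ is linearly independent by Theorem \ref{Thm:basis}, this forces $g(b) = 0$ for all $b \in \B^E$, i.e., $\whf_1 = \whf_2$.

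There is no substantive obstacle here: the entire argument is a direct translation between the coordinate expansion of $f$ in the basis $\{\E^{[b]}\}$ and the bilinear pairing with $\E_{[a]}$. The only subtlety worth flagging is conceptual rather than technical, namely distinguishing the two roles of $\E$ through its partial evaluations $\E^{[b]} \in \C_\A^E$ and $\E_{[a]} \in \C^{\B^E}$, so that the pairing in the statement is unambiguously well defined.
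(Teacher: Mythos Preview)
Your proof is correct and follows essentially the same approach as the paper: invoke Theorem~\ref{Thm:basis} to obtain the unique basis expansion $f=\sum_{b}c_b\,\E^{[b]}$, set $\whf(b)=c_b$, and verify the defining identity via the chain $\E^{[b]}(a)=\E(a,b)=\E_{[a]}(b)$. The only difference is that you spell out the uniqueness argument explicitly (showing that any $\whf$ satisfying the identity yields the zero element when subtracted and then invoking linear independence), whereas the paper simply reads uniqueness off the uniqueness of the coefficients in the basis expansion; your version is a bit more careful but not materially different.
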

\begin{proof}
Theorem \ref{Thm:basis} asserts that $\{\E^{[b]} | b \in \B^{E}\}$ is a basis
of $\C_{\A}^{E}$. Hence, $f$ can be uniquely written as $f = \sum_{b \in \B^{E}} c_{b} \E^{[b]}$ where $c_{b} \in \C$ for all $b \in \B^{E}$.
Let $\whf(b) = c_{b}$ then $\whf$ is a map from $\B^{E}$ into $\C$ and it is unique. Therefore, we are done if we show this $\whf$ is consistent
with the definition. But $f = \sum_{b \in \B^{E}} \whf(b) \E^{[b]}$ implies
\begin{eqnarray*}
f(a) \hspace{-.5cm}&&= \sum_{b \in \B^{E}} \whf(b) \E^{[b]}(a) = \sum_{b \in \B^{E}} \whf(b) \E(a,b) 
 = \sum_{b \in \B^{E}} \whf(b) \E_{[a]}(b) 
 = \langle \whf, \E_{[a]} \rangle 
\end{eqnarray*}
as desired.
\end{proof}

%
Let $E_{1}$ and $E_{2}$ be two disjoint finite sets and assume $f_{1} \in \C_{\A}^{E_{1}}$ and $f_{2} \in \C_{\A}^{E_{2}}$. Let $E = E_{1} \cup E_{2}$, then for any $b_{E} \in \B^{E}$, we have
\begin{eqnarray*}
\langle f_{1} \ot f_{2}, \E^{[b_{E}]} \rangle \hspace{-.5cm} &&
= \sum_{a_{E} \in \A^{E}} (f_{1} \ot f_{2})(a_{E}) \E(a_{E}, b_{E}) \\
&&\hspace{0cm} = \sum_{a_{E} \in \A^{E}} (f_{1} \ot f_{2})(a_{E})    (\Ot_{e\in E} \tau_{e}(b_{E}(e))) (a_{E}) \\
&&\hspace{-0cm} = \sum_{a_{E} \in \A^{E}} \!\!\!\!\! f_{1}(a_{E:E_1})  f_{2}(a_{E:E_2})
  (\! \Ot_{e\in E_{1}} \!\! \tau_{e}(b_{E}(e)) \ot \Ot_{e\in E2} \!\! \tau_{e}(b_{E}(e) )  ) (a_{E}) \\
&&\hspace{-0cm} = \sum_{a_{E} \in \A^{E}}  f_{1}(a_{E:E_1})  f_{2}(a_{E:E_2}) 
   (\Ot_{e\in E_{1}}  \tau_{e}(b_{E}(e) ) ) (a_{E:E_1})  \cdot  (\Ot_{e\in E_2} \tau_{e}(b_{E}(e) ) ) (a_{E:E_2})  \\
&&\hspace{-0cm} \stackrel{(a)}{=} \sum_{a_{E:E_{1}} \in \A^{E_{1}}} f_{1}(a_{E:E_{1}})   (\Ot_{e\in E_{1}} \tau_{e}(b_{E}(e)) ) (a_{E:E_{1}}) 
 \sum_{a_{E:E_{2}} \in \A^{E_{2}}} f_{2}(a_{E:E_{2}})    (\Ot_{e\in E_{2}} \tau_{e}(b_{E}(e)) ) (a_{E:E_{2}}) \\
&&\hspace{-0cm} \stackrel{(b)}{=} \sum_{a_{E:E_{1}} \in \A^{E_{1}}} f_{1}(a_{E:E_{1}})   (\Ot_{e\in E_{1}} \tau_{e}(b_{E:E_1}(e)) ) (a_{E:E_{1}}) 
\sum_{a_{E:E_{2}} \in \A^{E_{2}}} f_{2}(a_{E:E_{2}})    (\Ot_{e\in E_{2}} \tau_{e}(b_{E:E_2}(e)) ) (a_{E:E_{2}}) \\
&&\hspace{-0cm}= \langle f_{1}, \E_{1}^{[b_{E:E_{1}}]} \rangle \cdot \langle f_{2}, \E_{2}^{[b_{E:E_{2}}]} \rangle
\end{eqnarray*}
where (a) is due to lemma \ref{Lemma:0} and (b) follows from the fact that $b_{E}(e) = b_{E:E_{i}}(e)$ for all $e \in E_{i}$, $i = 1, 2$. Now induction can be used to show that this holds for any disjoint sets $E_{1}, \ldots, E_{p}$. Hence, we have the following proposition (the second part of the proposition can be shown using a similar argument).
\begin{Prop}
Let $\{ E_{1}, \ldots, E_{p} \}$ be an arbitrary partition of a finite set $E$ and $\B$ be an arbitrary basis of $\C^{\A}$. 
 Further, let $f_{i} \in \C_{\A}^{E_{i}}$ for all $i$. Then for any $b_{E} \in \B^{E}$
\[
\langle \Ot_{1 \leq i \leq p} f_{i}, \E^{[b_{E}]} \rangle = \prod_{1 \leq i \leq p} \langle f_{i}, \E_{i}^{[b_{E:E_{i}}]} \rangle
\]
and for any $a_{E} \in A_{E}$
\[
\langle \Ot_{1 \leq i \leq p} \whf_{i}, \E_{[a_{E}]} \rangle = \prod_{1 \leq i \leq p} \langle \whf_{i}, \E_{i[a_{E:E_{i}}]} \rangle
\]
\label{prop:inner}
\end{Prop}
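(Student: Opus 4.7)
The plan is to establish both identities from a single underlying observation: the objects $\E^{[b_E]}$ and $\E_{[a_E]}$ each factorize as a c-tensor across the given partition, after which the inner product splits because the sum over $\A^E$ (respectively $\B^E$) breaks into an iterated sum by Lemma \ref{Lemma:0}.

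For the first identity, the case $p=2$ is exactly the displayed calculation immediately preceding the proposition. I would proceed by induction on $p$. In the inductive step, associativity of the c-tensor lets me rewrite $\Ot_{i=1}^{p} f_i = (\Ot_{i=1}^{p-1} f_i) \ot f_p$ and regard this as a $2$-fold c-tensor relative to the coarser partition $\{E_1 \cup \cdots \cup E_{p-1},\, E_p\}$ of $E$. Applying the already-proved $p=2$ case to this coarser splitting gives
\[
\langle \Ot_{i=1}^{p-1} f_i,\, \E'^{[b_{E:E_1 \cup \cdots \cup E_{p-1}}]} \rangle \cdot \langle f_p,\, \E_p^{[b_{E:E_p}]} \rangle ,
\]
where $\E'$ is the $\E$-map attached to the union $E_1 \cup \cdots \cup E_{p-1}$. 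The induction hypothesis, applied to the first factor with the partition $\{E_1,\ldots,E_{p-1}\}$, finishes the step. The only bookkeeping point is that the iterated restriction $(b_{E:E_1 \cup \cdots \cup E_{p-1}})_{:E_i}$ equals $b_{E:E_i}$ for $1 \leq i \leq p-1$, which is immediate from the uniqueness clause of Lemma \ref{Lemma:0}.

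For the second identity, the cleanest route is to first prove the auxiliary factorization
\[
\E_{[a_E]} \;=\; \Ot_{1 \leq i \leq p} \E_{i[a_{E:E_i}]}
\]
as an equality in $\C_{\B}^{E}$. Unwinding the definition, $\E(a_E, b_E) = (\Ot_{e \in E} \tau_e(b_E(e)))(a_E)$, and regrouping the c-tensor over $e\in E$ into a c-tensor over the blocks $E_i$ via associativity yields precisely this identity. With it in hand, the inner product $\langle \Ot_i \whf_i, \E_{[a_E]} \rangle$ taken over $\B^E$ is computed by repeating the base-case derivation from the excerpt verbatim, now with the roles of $\A$ and $\B$ interchanged; Lemma \ref{Lemma:0} again separates the sum over $\B^E$ into a product of sums over the $\B^{E_i}$, giving $\prod_i \langle \whf_i, \E_{i[a_{E:E_i}]} \rangle$.

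I do not anticipate a genuine obstacle here. The argument is a mechanical extension of the two-block computation already carried out in the text, and the only delicate point is notational: keeping the various single and double restrictions aligned across the partition, a task that Lemma \ref{Lemma:0} handles cleanly.
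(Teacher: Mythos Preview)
Your proposal is correct and follows essentially the same route as the paper: the paper proves the $p=2$ case of the first identity by the displayed computation, then says ``induction can be used'' for general $p$, and remarks that ``the second part of the proposition can be shown using a similar argument''; you simply fill in those two ellipses in the natural way. Your explicit isolation of the factorization $\E_{[a_E]} = \Ot_{i} \E_{i[a_{E:E_i}]}$ is a mild presentational improvement over the paper's terse ``similar argument'', but it is not a different method.
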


%
%

\section{Holant theorem}

\label{sec:holant}

Let $E_{1}, \ldots, E_{p}$ and $E'_{1}, \ldots, E'_{r}$ be two arbitrary partitions of a finite set $E$.
Further let $g_{i} \in \C_{\A}^{E_{i}}$ for all $1 \leq i \leq p$ and let $h_{i} \in \C_{\A}^{E'_{i}}$ for all $1 \leq i \leq r$. 
The following lemma will give a proof of the Holant theorem.
\begin{Lemma}
Let $g_{1}, \ldots, g_{p}$ and $h_{1}, \ldots, h_{r}$ be as above. Then under any basis $\B$ of $\C^{\A}$, the following holds 
\[
\langle \Ot_{1 \leq i \leq p} g_{i}, \Ot_{1 \leq i \leq r} h_{i} \rangle = \langle \Ot_{1 \leq i \leq p} \whg_{i}, \Ot_{1 \leq i \leq r} \wch_{i} \rangle
\]
\label{Lemma:holant}
\end{Lemma}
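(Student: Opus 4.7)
The plan is to reduce the identity to a Parseval-type relation plus two applications of Proposition \ref{prop:inner}. Let $G := \Ot_{1 \leq i \leq p} g_{i} \in \C_{\A}^{E}$ and $H := \Ot_{1 \leq i \leq r} h_{i} \in \C_{\A}^{E}$. The goal is to rewrite both sides of the claim in terms of $G$ and $H$ and then show
\[
\langle G, H \rangle = \langle \widehat{G}, \widecheck{H} \rangle,
\]
where the left inner product is in $\C_{\A}^{E}$ and the right is in $\C_{\B}^{E}$.

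First, I would identify $\widehat{G}$ and $\widecheck{H}$ explicitly as c-tensors. For $\widecheck{H}$, the second inner-product identity is immediate from Proposition \ref{prop:inner}: for any $b_{E} \in \B^{E}$,
\[
\widecheck{H}(b_{E}) = \langle H, \E^{[b_{E}]} \rangle = \prod_{i} \langle h_{i}, \E_{i}^{[b_{E:E'_{i}}]} \rangle = \prod_{i} \widecheck{h}_{i}(b_{E:E'_{i}}) = \bigl(\Ot_{i} \widecheck{h}_{i}\bigr)(b_{E}),
\]
so $\widecheck{H} = \Ot_{i} \widecheck{h}_{i}$. For $\widehat{G}$, I would use the second half of Proposition \ref{prop:inner} together with the defining property of the hat transform: for any $a_{E}\in \A^{E}$,
\[
\langle \Ot_{i} \widehat{g}_{i}, \E_{[a_{E}]} \rangle = \prod_{i} \langle \widehat{g}_{i}, \E_{i[a_{E:E_{i}}]} \rangle = \prod_{i} g_{i}(a_{E:E_{i}}) = G(a_{E}) = \langle \widehat{G}, \E_{[a_{E}]}\rangle,
\]
and then invoke the uniqueness part of Proposition \ref{def}/Theorem \ref{Thm:basis} to conclude $\widehat{G} = \Ot_{i} \widehat{g}_{i}$.

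Next, I would prove the Parseval-type identity $\langle f, g \rangle = \langle \widehat{f}, \widecheck{g} \rangle$ for all $f,g \in \C_{\A}^{E}$. This is a direct computation: expanding $\widecheck{g}(b) = \sum_{a} g(a)\E(a,b)$ and swapping the order of summation gives
\[
\langle \widehat{f}, \widecheck{g} \rangle = \sum_{b} \widehat{f}(b) \sum_{a} g(a)\E(a,b) = \sum_{a} g(a) \langle \widehat{f}, \E_{[a]}\rangle = \sum_{a} g(a) f(a) = \langle f,g\rangle,
\]
using the defining identity $f(a) = \langle \widehat{f}, \E_{[a]}\rangle$. Applying this with $f = G$ and $g = H$, and substituting the c-tensor expressions for $\widehat{G}$ and $\widecheck{H}$ from the first step, yields the claim.

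I expect the only subtle point to be the asymmetric roles of $\widehat{\,\cdot\,}$ and $\widecheck{\,\cdot\,}$ on the two sides of the equation; their appearance is forced by the Parseval-type identity above, which pairs the basis expansion coefficients of one function with the dual inner products of the other. Beyond keeping this asymmetry straight, every step is a routine application of Proposition \ref{prop:inner}, the definition of $\widehat{\,\cdot\,}$, and summation swap.
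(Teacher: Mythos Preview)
Your proof is correct and uses the same ingredients as the paper's --- Definition~\ref{def}, both halves of Proposition~\ref{prop:inner}, and a summation swap --- but organizes them differently. The paper performs one long chain of equalities, expanding and contracting c-tensors and inner products step by step without ever naming intermediate objects. You instead factor the argument into three clean sub-claims: $\widecheck{H} = \Ot_i \wch_i$, $\widehat{G} = \Ot_i \whg_i$ (via uniqueness), and a general Parseval-type identity $\langle f, g\rangle = \langle \whf, \widecheck{g}\rangle$ valid for \emph{all} $f,g \in \C_{\A}^{E}$. This modular decomposition is arguably more transparent, and isolating the Parseval identity as a standalone fact is a small but genuine gain: it makes visible that the c-tensor structure is only needed to identify $\widehat{G}$ and $\widecheck{H}$, while the equality of inner products holds universally.
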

(note that the bilinear map on the left of the equality is over the domain $\C_{\A}^{E} \times \C_{\A}^{E}$ and the one on the right is over $\C_{\B}^{E} \times \C_{\B}^{E}$).
\begin{proof}
The lemma follows via the following series of equalities:
\begin{eqnarray*}
\langle\Ot_{1 \leq i \leq p} g_{i} ,  \Ot_{1 \leq i \leq r} h_{i}\rangle
\!\!\!\! && \stackrel{(\ref{eq:inner})}{=} \sum_{x_{E} \in \A^{E} } ( \Ot_{1 \leq i \leq r} h_{i} )(x_{E})   ( \Ot_{1 \leq i \leq p} g_{i} )(x_{E}) \\
&& \stackrel{(a)}{=} \sum_{x_{E} \in \A^{E} } ( \Ot_{1 \leq i \leq r} h_{i} )(x_{E})  
 \prod_{1 \leq i \leq p} g_{i}(x_{E:E_{i}}) \\
&& \stackrel{(b)}{=} \sum_{x_{E} \in \A^{E} } ( \Ot_{1 \leq i \leq r} h_{i} )(x_{E}) 
 \prod_{1 \leq i \leq p} \langle\whg_{i}, \E_{i[x_{E:E_{i}}]}   \rangle \\
&& \stackrel{(c)}{=} \sum_{x_{E} \in \A^{E} } ( \Ot_{1 \leq i \leq r} h_{i} )(x_{E}) \ 
  \langle\Ot_{1 \leq i \leq p} \whg_{i},  \E_{[x_{E}]}  \rangle \\
&& \stackrel{(\ref{eq:inner})}{=} \sum_{x_{E} \in \A^{E} } ( \Ot_{1 \leq i \leq r} h_{i} )(x_{E}) 
\!\!\!\! \sum_{y_{E} \in \B^{E}} (\Ot_{1 \leq i \leq p} \whg_{i})(y_{E})  \E_{[x_{E}]}(y_{E})  \\
%
&& \stackrel{(d)}{=} \sum_{x_{E} \in \A^{E} } ( \Ot_{1 \leq i \leq r} h_{i} )(x_{E}) 
 \!\!\!\!\sum_{y_{E} \in \B^{E}} (\Ot_{1 \leq i \leq p} \whg_{i})(y_{E})  \E(x_{E}, y_{E})  \\
&& \stackrel{(d)}{=} \sum_{y_{E} \in \B^{E}} (\Ot_{1 \leq i \leq p} \whg_{i})(y_{E})
\!\!\!\! \sum_{x_{E} \in \A^{E} } ( \Ot_{1 \leq i \leq r} h_{i} )(x_{E}) 
 \E^{[y_{E}]}(x_{E})  \\
&& \stackrel{(\ref{eq:inner})}{=} \sum_{y_{E} \in \B^{E}} (\Ot_{1 \leq i \leq p} \whg_{i})(y_{E})
 \langle \Ot_{1 \leq i \leq r} h_{i},
 \E^{[y_{E}]} \rangle \\
&& \stackrel{(c)}{=} \sum_{y_{E} \in \B^{E}} (\Ot_{1 \leq i \leq p} \whg_{i})(y_{E})
 \prod_{1 \leq i \leq r} \langle h_{i},
 \E_{i}^{[y_{E:E_{i}}]} \rangle \\ 
&& \stackrel{(b)}{=} \sum_{y_{E} \in \B^{E}} (\Ot_{1 \leq i \leq p} \whg_{i})(y_{E})
 \prod_{1 \leq i \leq r} \wch_{i}(y_{E:E_{i}}) \\ 
&& \stackrel{(a)}{=} \sum_{y_{E} \in \B^{E}} (\Ot_{1 \leq i \leq p} \whg_{i})(y_{E})
 (\Ot_{1 \leq i \leq r} \wch_{i}) (y_{E}) \\  
&& \stackrel{(\ref{eq:inner})}{=} \langle\Ot_{1 \leq i \leq p} \whg_{i}, \Ot_{1 \leq i \leq r} \wch_{i}\rangle 
\end{eqnarray*}
where (a), (b) and (c) are due to definition of c-tensor, definition \ref{def} and proposition 
 \ref{prop:inner}, respectively. Definitions of $\E, \E_{[x_{E}]}$ and $\E^{[y_{E}]}$
give equalities labeled by (d).
\end{proof}

Now we state and prove Holant theorem

\begin{Theorem}
Let $g_{1}, \ldots, g_{p}$ and $h_{1}, \ldots, h_{r}$ be as defined earlier. Then for any basis $\B$ of $\C^{\A}$, the following holds
\begin{eqnarray*}
\sum_{x_{E} \in \A^{E}} 
\prod_{i=1}^{p} g_{i}(x_{E:E_{i}}) 
\prod_{i=1}^{r} h_{i}(x_{E:E'_{i}}) 
= 
\sum_{y_{E} \in \B^{E}} \prod_{i=1}^{p} \whg_{i}(y_{E:E_{i}}) 
\prod_{i=1}^{r} \wch_{i}(y_{E:E'_{i}})
\end{eqnarray*}
\end{Theorem}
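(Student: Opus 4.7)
The plan is to recognize that the Holant theorem is essentially just Lemma~\ref{Lemma:holant} re-expressed in terms of the underlying sums and products, so the work is to translate between the two languages using only the definitions of the inner product \eqref{eq:inner} and of c-tensor.

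First, I would rewrite the left-hand side of the theorem. By the definition of c-tensor (and its associativity), for any $x_E \in \A^E$,
\[
\prod_{i=1}^{p} g_{i}(x_{E:E_{i}}) = \Bigl(\Ot_{1 \leq i \leq p} g_i\Bigr)(x_E), \qquad
\prod_{i=1}^{r} h_{i}(x_{E:E'_{i}}) = \Bigl(\Ot_{1 \leq i \leq r} h_i\Bigr)(x_E),
\]
where both c-tensors are well defined because $\{E_1,\ldots,E_p\}$ and $\{E'_1,\ldots,E'_r\}$ are partitions of $E$ (so their constituents are pairwise disjoint). Plugging these into the left-hand side and comparing with \eqref{eq:inner} with $S = \A^E$ identifies the sum as the bilinear pairing over $\C_{\A}^E \times \C_{\A}^E$:
\[
\sum_{x_{E} \in \A^{E}} \prod_{i=1}^{p} g_{i}(x_{E:E_{i}})\prod_{i=1}^{r} h_{i}(x_{E:E'_{i}})
= \Bigl\langle \Ot_{1 \leq i \leq p} g_{i},\ \Ot_{1 \leq i \leq r} h_{i} \Bigr\rangle.
\]

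Next I would repeat the same translation for the right-hand side, but now with $S = \B^E$: the functions $\whg_i \in \C_{\B}^{E_i}$ and $\wch_i \in \C_{\B}^{E'_i}$ live in the analogous spaces indexed by $\B$ rather than $\A$, and the c-tensor construction carries over without change, giving
\[
\sum_{y_{E} \in \B^{E}} \prod_{i=1}^{p} \whg_{i}(y_{E:E_{i}})\prod_{i=1}^{r} \wch_{i}(y_{E:E'_{i}})
= \Bigl\langle \Ot_{1 \leq i \leq p} \whg_{i},\ \Ot_{1 \leq i \leq r} \wch_{i} \Bigr\rangle.
\]

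Finally, I would invoke Lemma~\ref{Lemma:holant} to equate the two inner products, which closes the chain. There is essentially no obstacle here: the entire content of the theorem has been packaged into the lemma, and the only subtlety is the bookkeeping of ensuring that the c-tensors over the two different partitions $\{E_i\}$ and $\{E'_i\}$ are each formed over pairwise disjoint sets (so that $\Ot$ is well defined), and that the inner product on the right is interpreted as the pairing on $\C_{\B}^E$ rather than on $\C_{\A}^E$. Once those identifications are made clear, the theorem is immediate.
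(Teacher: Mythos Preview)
Your proposal is correct and mirrors the paper's own proof almost step for step: rewrite each product as a c-tensor evaluation, recognize the outer sum as the bilinear pairing of \eqref{eq:inner}, invoke Lemma~\ref{Lemma:holant}, and then unwind the same identifications on the $\B$-side. The only points you flag as subtleties (well-definedness of the c-tensors over each partition and interpreting the right-hand pairing over $\C_{\B}^{E}$) are exactly the ones the paper handles implicitly, so there is nothing to add.
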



\begin{proof}
\begin{eqnarray*}
\sum_{x_{E} \in \A^{E}} \prod_{i=1}^{p} g_{i}(x_{E:E_{i}})  \prod_{i=1}^{r} h_{i}(x_{E:E'_{i}})
&\stackrel{(a)}{=}&
\sum_{x_{E} \in \A^{E}}   (\Ot_{1 \leq i \leq p} g_{i} )(x_{E}) (\Ot_{1 \leq i \leq r} h_{i} )(x_{E}) \\
&\stackrel{(\ref{eq:inner})}{=}&
\langle\Ot_{1 \leq i \leq p} g_{i}, \Ot_{1 \leq i \leq r} h_{i}\rangle \\
%
&\stackrel{(b)}{=}&
 \langle\Ot_{1 \leq i \leq p} \whg_{i}, \Ot_{1 \leq i \leq r} \wch_{i}\rangle \\
&\stackrel{(\ref{eq:inner})}{=}&
\sum_{y_{E} \in \B^{E}}   (\Ot_{1 \leq i \leq p} \whg_{i} )(y_{E}) (\Ot_{1 \leq i \leq r} \wch_{i} )(y_{E}) \\
&\stackrel{(a)}{=}&
\sum_{y_{E} \in \B^{E}} \prod_{i=1}^{p} \whg_{i}(y_{E:E_{i}})   \prod_{i=1}^{r} \wch_{i}(y_{E:E'_{i}}) 
\end{eqnarray*}
where (a) is due to definition of c-tensor and (b) is due to Lemma \ref{Lemma:holant}.
\end{proof}

\section{Conclusion}

\label{conclusion}

In this paper, we provide an 
alternative proof of Holant theorem using a commutative notion of tensor 
product. The proof appears to be simpler and more transparent. As 
holographic algorithms are expected to demonstrate their power in many 
problems of information theory, we hope that this work makes this new 
family of algorithms and the notion of holographic reduction more 
accessible to audience in information theory community.

\section*{Acknowledgment}
The first author wishes to thank Frank Kschischang for introducing him to the subject of holographic algorithms.

\bibliography{C:/Research_Latex/bibliographys/Holographic}
\end{document}